\title{Profinite trees, through Lawvere theories and the \pdflambda-calculus} 
\author{Vincent Moreau}{Tallinn University of Technology, Estonia }{vincent.moreau@taltech.ee}{https://orcid.org/0009-0005-0638-1363}{(Optional) author-specific funding acknowledgements}
\authorrunning{V. Moreau} 
\keywords{Dummy keyword} 
\begin{document}

\maketitle

\begin{abstract}
    The starting point of algebraic language theory is that regular languages of finite words are exactly those recognized by finite monoids. This finiteness condition gives rise to a topological space whose points, called profinite words, encode the limiting behavior of words with respect to finite monoids.

    In this work, we move from words and monoids to trees and clones, the algebraic structures underlying deterministic bottom-up tree automata. Using the categorical notion of codensity monad, we introduce a profinite completion for clones. We prove that this construction on clones simultaneously generalizes the ultrafilter monad on sets and the profinite completion of monoids. When applied to free clones on a ranked alphabet, the profinite completion of clones yields a notion of profinite tree, providing a topological approach to regular languages of finite trees. We prove that these profinite trees coincide with a well-identified fragment of the profinite $\lambda$-calculus.
\end{abstract}

\section{Introduction}

\paragraph*{Algebraic language theory for finite words}

One of the most remarkable properties of the class of regular languages
is its robustness.
Indeed, in the simplest setting consisting of finite words,
regular languages can be described through various equivalent ways,
such as regular expressions,
different notions of finite automata,
and monadic second-order logic.
The field of algebraic language theory
is founded on one such description,
expressed in terms of recognition by finite monoids~\cite[Theorem~1]{Rabin1959}.
One of the benefits of this approach is that
monoid structures are general enough to include both free monoids
\[
    \Sigma^*
    \quad=\quad
    \{\text{words over the alphabet $\Sigma$}\}
\]
and also monoids of transition functions~$Q \To Q$,
whose composition represents the behavior of words when run in deterministic automata.
The monoid homomorphisms
\[
    \delta
    \quad:\quad
    \Sigma^*
    \ \longto\
    Q \To Q
\]
correspond exactly to families of set-theoretic transition functions,
obtained by evaluating the homomorphism on each letter
\[
    \delta(a)
    \quad:\quad
    Q
    \ \longto\
    Q
    \qquad\text{for every letter~$a$ in the alphabet~$\Sigma$}
\]
so we recover deterministic automata, without initial and final states,
in this algebraic setting.
More generally, any monoid homomorphism into a finite monoid
\[
    \delta
    \quad:\quad
    \Sigma^*
    \ \longto\
    M
\]
defines a partition of $\Sigma^*$ of finite index
into regular languages.
Hence, such data can be thought of as a generalized deterministic automaton,
associating to any word $w \in \Sigma^*$ the element~$\delta(w) \in M$ representing the result of the run of $w$.
This approach permits to develop a number of tools of an algebraic nature,
such as the syntactic monoids of languages which are central to Schützenberger's theorem relating star-free languages and aperiodic monoids~\cite{Schtzenberger1965}.

\paragraph*{The profinite syntax of automata theory}

A fundamental property of finite words is that they represent the operations that can be carried in all monoids.
Very concretely, the word $aaba \in \{a, b\}^*$ induces a family of binary operations on all monoids~$M$, defined as
\[
    (x, y)
    \ \in\
    M \times M
    \quad\longmapsto\quad
    x \cdot x \cdot y \cdot x
    \ \in\
    M
\]
Conversely, any family of binary operations defined on all monoids
verifying a naturality condition
is given by a word on the alphabet~$\{a, b\}$.
This demonstrates that finite words provide a syntax for the operations on monoids.

Given that regular languages arise by considering finite monoids,
we can consider the operations which are defined on all finite monoids.
One of them is given by the following fact:
for any finite monoid~$M$, there exists a natural number~$n \ge 1$ such that
for any element~$x \in M$, its power $x^n$ is idempotent, i.e. $x^n \cdot x^n = x^n$.
This property,
akin to an algebraic formulation of the pumping lemma,
defines a natural family of unary operations
\[
    x
    \ \in\
    M
    \quad\longmapsto\quad
    x^n
    \ \in\
    M
\]
on all finite monoids~$M$, which is not given by any finite word.
Nevertheless,
operations defined on all finite monoids correspond to profinite words,
which are the points of the Stone dual space of the Boolean algebra of regular languages, see the work of Pippenger~\cite{Pippenger1997} and Almeida~\cite[\S~3.6]{doi:10.1142/2481}.
Profinite words thus provide a notion of syntax for finite monoids, hence for automata theory,
and they form a compactification of the discrete space of finite words
\[
    \underbrace{\Sigma^*}_{\text{finite words}}
    \quad\subseteq\quad
    \underbrace{\ProMon{\Sigma^*}}_{\text{profinite words}}
\]
The space of profinite words~$\ProMon{\Sigma^*}$ can be built in multiple ways,
as a Cauchy completion,
through Stone duality,
and via naturality,
see the survey~\cite{pin2009}.
Profinite words are of prime importance in the study of finite monoids,
for example they are at the heart of Reiterman's theorem on pseudovarieties of monoids~\cite{Reiterman1982}.
They also appear in related topics such as symbolic dynamics~\cite{Almeida2020} and circuit complexity~\cite{lmcs:4336}.

\paragraph*{Algebraic language theory for finite ranked trees with sharing}

In this article, we apply the algebraic viewpoint on formal language theory
to the case of regular languages of finite trees.
This amounts to moving from the setting of words and automata,
where composition is inherently binary,
to a framework which allows an element to be composed with a finite family of elements following the laws of tree grafting.
For this matter, we identify (abstract) clones,
i.e. families of sets of operations indexed by natural numbers representing their arity,
as an appropriate notion of algebra.
Indeed, these algebras include both free clones
\[
    \Tree(\Sigma)_n
    \quad=\quad
    \{\text{finite ranked trees over~$\Sigma$ with shared variables among $v_1, \dots, v_n$}\}
\]
on a ranked alphabet~$\Sigma$,
and the clones of endofunctions $\Endo(Q)$ on any set~$Q$.
In analogy to the case of monoids, the clone morphisms
\[
    \delta
    \quad:\quad
    \Tree(\Sigma)
    \ \longto\
    \Endo(Q)
\]
then correspond to structures of bottom-up, deterministic automaton without final states on the set $Q$.
The notion of clone is quite flexible,
and in particular we can encode all monoids~$M$ and sets~$X$ as clones
\begin{align*}
    \Act(M)
    \quad & :=\quad
    \text{clone freely generated by the elements of $M$ seen as unary operations}
    \\
    \Cst(X)
    \quad & :=\quad
    \text{clone freely generated by the elements of $X$ seen as constants}
\end{align*}
An important fact about this algebraic setting founded on clones
is that we did not fix a ranked alphabet~$\Sigma$ and did not define our notion of algebra according to it.
Instead, clones provide an abstract notion of algebra,
and we recover a way to deal with ranked alphabets through free clones.

\paragraph*{Profinite completions via codensity monads}

This article aims to establish on firm ground profinite methods in the setting of finite trees.
We start with the simple observation that any monoid has a profinite completion
\[
    \begin{tikzcd}[ampersand replacement=\&]
        M \&\&\&\&\& {\ProMon{M}}
        \arrow["{\text{profinite completion}}", between={0.1}{0.9}, squiggly, from=1-1, to=1-6]
    \end{tikzcd}
    \qquad\qquad\qquad\text{for any monoid~$M$}
\]
and that the monoid of profinite words~$\ProMon{\Sigma^*}$ is the profinite completion of the monoid of finite words~$\Sigma^*$.
This general construction admits a particularly elegant formulation in the language of category theory
through the notion of codensity monad.
The main idea of this abstract construction is that,
under mild hypothesis,
any notion of finite structure induces its associated profinite completion.
In particular,
a theorem by Kennison and Gildenhuy~\cite{KENNISON1971317} states that considering finite sets leads to the ultrafilter monad
\[
    \begin{tikzcd}[ampersand replacement=\&]
        X \&\&\&\&\& {\ProSet X}
        \arrow["{\text{ultrafilter monad}}", between={0.1}{0.9}, squiggly, from=1-1, to=1-6]
    \end{tikzcd}
    \qquad\qquad\qquad\text{for any set~$X$}
\]
which is central to model theory and logic in general.

Well-informed of this categorical apparatus,
we identify a class of clones that we consider as finite algebras
which automatically induces a profinite completion of clones
\[
    \begin{tikzcd}[ampersand replacement=\&]
        C \&\&\&\&\& {\ProClone{C}}
        \arrow["{\text{profinite completion}}", between={0.1}{0.9}, squiggly, from=1-1, to=1-6]
    \end{tikzcd}
    \qquad\qquad\qquad\text{for any clone~$C$}
\]
defined in the same spirit as the natural operations on all finite monoids,
albeit transposed appropriately to the setting of clones.
In particular, this profinite completion yields a notion of profinite tree
defined as the operations of the completion of the associated clone of trees
\[
    \ProClone{\Tree(\Sigma)}_n
    \quad=\quad
    \{\text{profinite trees on $\Sigma$ with shared variables among $v_1, \dots, v_n$}\}
\]
Moreover, these sets naturally carry a structure of Stone space
which is compatible with the algebraic structure of clone.

Having introduced in this article this new construction on clones,
we wish to relate it to other profinite completions.
For that matter, we then prove that the encodings of monoids and sets as clones commute with their profinite completions, i.e.
\begin{align*}
    \Act\big(\ProMon{M}\big)
    \quad & \cong\quad
    \ProClone{\Act(M)}
    \qquad\qquad\qquad\text{for any monoid~$M$}
    \\
    \Cst\big(\ProSet X\big)
    \quad & \cong\quad
    \ProClone{\Cst(X)}
    \qquad\qquad\qquad\text{ for any set~$X$}
\end{align*}
This demonstrates that the profinite completion of clones faithfully generalizes the ones of monoids and sets,
leading to new connections between the topological approaches to recognition in the cases of words and of trees, and model theory.

\paragraph*{Profinite trees and \pdflambda-terms}

One starting point of the connection between regular languages of finite words and trees and the theory of $\lambda$-calculus is given by the Church encoding,
which associates to any ranked alphabet~$\Sigma$ a type~$\Church{\Sigma}$,
and to any tree of $\Sigma$ a closed $\lambda$-term of that type.
For example, the word alphabet with two letters $\Sigma := \{a, b\}$
is encoded as the type
\[
    \Church{\{a, b\}}
    \quad:=\quad
    \underbrace{(\tyo \To \tyo)}_{\text{letter $a$}}
    \ \To \
    \underbrace{(\tyo \To \tyo)}_{\text{letter $b$}}
    \ \To\
    \underbrace{\tyo}_{\text{input}}
    \ \To\
    \underbrace{\tyo}_{\text{output}}
\]
and the word $aaba \in \Sigma^*$ is encoded as the $\lambda$-term
\[
    \lambda(a : \tyo \To \tyo).\,
    \lambda(b : \tyo \To \tyo).\,
    \lambda(c : \tyo).\,
    a\ (b\ (a\ (a\ c)))
    \ .
\]
This demonstrates that ranked alphabets and finite trees can be understood as a fragment of the simply-typed $\lambda$-calculus,
of highly compositional nature.

Inspired by the seminal work of Salvati on higher-order regular languages and denotational semantics~\cite{DBLP:conf/wollic/Salvati09} and the aforementioned topological approach to formal languages,
the profinite $\lambda$-calculus, introduced in~\cite{entics:12280},
extends the simply-typed $\lambda$-calculus with profinite operators coming from automata theory.
In particular, this yields a compactification
\[
    \underbrace{\Tm(A)}_{\text{$\lambda$-terms}}
    \qquad\subseteq\quad
    \underbrace{\ProTm(A)}_{\text{profinite $\lambda$-terms}}
    \qquad\qquad\qquad\text{for any type $A$}
\]
More generally, this approach strives to develop an extension of automata theory to the higher order,
where new letters of the alphabet can be declared on the fly through a $\lambda$-abstraction,
in relation to higher-order model checking~\cite{HOParity} --
see Moreau's PhD thesis for an overview~\cite{moreau:tel-05428993}.

We are therefore in a situation where we have two candidates for the notion of profinite tree, i.e. one based on codensity monads and clones,
and another rooted in the theory of the $\lambda$-calculus and finitary denotational semantics.
In this article, we construct a family of isomorphisms of Stone spaces
\[
    \ProClone{\Tree(\Sigma)}_n
    \quad\cong\quad
    \ProTm(\tyo^n \To \Church{\Sigma})
    \qquad\qquad\qquad\text{for every $n \in \N$}
\]
which assemble together with an isomorphism of clones enriched in Stone spaces.
The proof relies on a parametricity argument, in the sense of Reynolds~\cite{Reynolds83}.
This demonstrates that these two notions coincide,
hence laying a solid foundation for further topological studies of regular tree languages.

\subsection{Plan of the article}

After this introduction,
we start by recalling in \Cref{sec:clones} the notions of clones,
and define four families of clones
\[
    \Tree(\Sigma)
    \qquad,\qquad
    \Endo(Q)
    \qquad,\qquad
    \Act(M)
    \qquad,\qquad
    \Cst(X)
\]
and their relationship in the language of adjoint functors.
In \Cref{sec:functorial-approach},
we revisit these constructions from the viewpoint of Lawvere theories
which provide a high-level view of free/forgetful adjunctions.
In \Cref{sec:profinite-completion-clones},
we recall the notion of codensity monad and its two main examples,
namely the profinite completion of monoids and the ultrafilter monad,
and we define the profinite completion of clones.
We then proceed in \Cref{sec:profinite-completion-isos} to develop the tools necessary to establish that
\begin{align*}
    \ProClone{\Act(M)}
    \quad & \cong\quad
    \Act\big(\ProMon{M}\big)
    \hspace{5cm}\text{ in \Cref{thm:act-profinite-iso}}
    \\
    \ProClone{\Cst(X)}
    \quad & \cong\quad
    \Cst\big(\ProSet X\big)
    \hspace{5cm}\text{in \Cref{thm:cst-profinite-iso}}
\end{align*}
In \Cref{sec:profinite-trees}, we introduce the notion of profinite tree and show in \Cref{prop:free-clones-bidefinable} that they verify a strong form of parametricity.
This permits to arrive smoothly, in \Cref{sec:profinite-lambda-terms},
to an isomorphism of clones enriched in Stone spaces
\[
    \ProClone{\Tree(\Sigma)}
    \quad\cong\quad
    \ProTm\Big(\tyo^{(-)} \To \Church{\Sigma}\Big)
    \hspace{3.2cm}\text{in \Cref{thm:isomorphism-theorem}}
\]
exhibiting profinite trees as a fragment of the profinite $\lambda$-calculus.

\subsection{Related work}

Stone duality has tight links with automata theory, see~\cite{pin2009} and~\cite{Gehrke_vanGool_2024}. Profinite words naturally appear in the Reiterman theorem for pseudovarieties, as proved in~\cite{Reiterman1982} and~\cite{Banaschewski1983}, as sets of profinite equations determine pseudovarieties~\cite{doi:10.1142/2481}. Profinite words have also been used to understand the limitedness problem in~\cite{DBLP:conf/icalp/Torunczyk12} and to show the decidability of weak MSO+U over infinite trees in~\cite{DBLP:conf/stacs/BojanczykT12}. There is a celebrated line of research extending Stone duality to take into account monoid operations on the topological side and residuation operations on the algebraic side, see~\cite{DBLP:conf/icalp/GehrkeGP08, DBLP:conf/icalp/GehrkeGP10}, which inspired the introduction of profinite $\lambda$-terms in~\cite{entics:12280} and thoroughly studied in~\cite{moreau:tel-05428993}.

There is a growing connection between automata theory and $\lambda$-calculus. Salvati has defined the notion of regular language of $\lambda$-terms in~\cite{DBLP:conf/wollic/Salvati09}, using semantic tools. In a more syntactic direction, Hillebrand and Kanellakis established in~\cite{DBLP:conf/lics/HillebrandK96} a link between the regularity of a language and the $\lambda$-definability of its characteristic function. This idea is at the heart of the implicit automata program research by Nguy\~{\^e}n and Pradic started in~\cite{nguyen-pradic-1}, which shows that Hillebrand and Kanellakis' result can be adapted to get a correspondence between star-free languages and a substructural fragment of the simply typed $\lambda$-calculus, see~\cite{titoPhD}. These two directions, semantic and syntactic, yield two notions of languages of $\lambda$-terms of any type which have been shown to coincide in~\cite{moreau_et_al:LIPIcs.CSL.2024.40}.

A lot of different algebras can be considered for trees, see~\cite{DBLP:books/ems/21/Bojanczyk21} for a survey. In~\cite{ESIK2005291Algebraic-recognizability}, Ésik and Weil have shown how preclones, i.e. non-symmetric operads, provide a suitable notion of algebras of trees. In~\cite{DBLP:journals/ijac/EsikW10}, they introduce their block product and use it to give a characterization of first-order definable tree languages.
In the present article,
we use clones as we then obtain an isomorphism theorem, and not only a bijection between the sets of constants as in the case of operads.

The general approach to regular languages and profinite completions using monads began in~\cite{DBLP:conf/dlt/Bojanczyk15} and was further studied in~\cite{Bojaczyk2023} and in~\cite{lmcs:6569}.
Our work directly follows this monadic approach,
given that we consider algebras defined by Lawvere theories
which correspond to finitary monads.
A profinite tree was first used in~\cite{10.1007/978-3-662-43951-7_4},
where it was defined as a Cauchy sequence of profinite trees for an appropriate metric.
The link between metric completions and Stone duality is a fundamental aspect of~\cite{DBLP:conf/icalp/GehrkeGP10} and was elaborated further in relation to Pervin spaces in~\cite{DBLP:conf/RelMiCS/Pin17}.

Codensity is the topic of abundant categorical literature, see e.g. in~\cite{leinstercod, AVERY20161229, DILIBERTI2020106379} where it provides a new point of view on already known monads. It has also been studied in relation to automata theory in a series of papers~\cite{DBLP:conf/fossacs/ChenAMU16, DBLP:conf/mfcs/UrbatACM17, 10.1145/3464691adamek-algebra-monad-profinite}.
We stress the fact that the abstract approach developed in some of these papers
does not directly apply to our setting,
as the theory of clones is infinitely sorted,
see e.g.~\cite[Assumption~2.5]{10.1145/3464691adamek-algebra-monad-profinite}.

Finally, this article uses tools coming from the tradition of higher-order algebraic theories,
in particular the Lawvere theory of clones described in the work of Fiore and Mahmoud on second-order algebraic theories~\cite{10.1007/978-3-642-15155-2_33, DBLP:journals/corr/FioreM14}
which can be understood in the context of higher-order algebraic theories as developed by Arkor and McDermott~\cite{ho-algebraic-theories}, see Arkor's PhD thesis for a revised account~\cite{arkor_2022_phd}.

\subsection{Notations}

\AP
We write $\intro*\Set$ and $\intro*\FinSet$ for the categories of sets and finite sets,
and $\intro*\Mon$ and $\intro*\FinMon$ for the categories of monoids and finite monoids.
We write $\Lan_I F$ for the left Kan extension of $F$ along $I$,
and $\Ran_I F$ for the right Kan extension of $F$ along $I$,
see~\cite[Chapter~X]{MacLane1978}.
A modern account of Stone spaces,
in the context of Priestley duality,
can be found in~\cite{Gehrke_vanGool_2024}.

\section{Abstract clones, tree automata, and monoids and sets}
\label{sec:clones}

In this section,
we recall the notion of "abstract clones",
introduced by Hall~\cite{cohn1981universal}.
We then go on to show that every "ranked alphabet"~$\Sigma$ induces a "clone"~$\Tree(\Sigma)$,
whose operations are ranked trees,
and which are freely generated by the elements of $\Sigma$.
Clones are our algebras to treat regular tree languages.

\begin{definition}
    \AP\label{def:clone}
    A ""clone""~$C$ is a family of sets
    \[
        C_n
        \qquad
        \text{where $n$ ranges over $\N$}
    \]
    together, for every $n \in \N$, with elements called ""variables""
    \[
        v_1, \dots, v_n
        \ \in\
        C_n
    \]
    and, for every $m, n \in \N$, with functions called ""substitutions""
    \[
        (-_1)[-_2]
        \quad:\quad
        C_n \times (C_m)^n
        \ \longto\
        C_m
    \]
    such that the three following axioms are verified:
    \begin{align*}
        v_k[x_1, \dots, x_n]
        \  & =\
        x_k
        \\
        x[v_1, \dots, v_n]
        \  & =\
        x
        \\
        x[y_1[\vec{z}], \dots, y_n[\vec{z}]]
        \  & =\
        x[\vec{y}][\vec{z}]
    \end{align*}
    A ""clone morphism""~$\varphi : C \to D$ is a family of functions
    \[
        \varphi_n
        \quad:\quad
        C_n
        \ \longto\
        D_n
        \qquad\text{where $n$ ranges over $\N$}
    \]
    which preserves variables and substitution functions, i.e.
    \begin{align*}
        \varphi_n(v_k)
        \  & =\
        v_k
        \\
        \varphi_m(x[\vec{y}])
        \  & =\
        \varphi_n(x)[\varphi_m(y_1), \dots, \varphi_m(y_n)]
    \end{align*}
    Clones and their morphisms form a category~$\intro*\Clone$
\end{definition}

\begin{remark}
    \AP\label{rmk:enriched-clones}
    We stress the fact that the above definition can be generalized to any cartesian category~$\V$,
    as described in the work of Fiore and Mahmoud~\cite[\S7.1]{DBLP:journals/corr/FioreM14}.
    Each $C_n$ is then an object of $\V$,
    and the variables and substitution are morphisms of $\V$
    \begin{align*}
        v_k
        \quad & :\quad
        1_{\V}
        \ \longto\
        C_n
        \qquad\text{for all $1 \le k \le n$}
        \\
        (-_1)[-_2]
        \quad & :\quad
        C_n \times (C_m)^n
        \ \longto\
        C_m
    \end{align*}
    where $1_{\V}$ is the terminal object of $\V$, and $\times$ its cartesian product.
    The equations of \Cref{def:clone} can then be rephrased in terms of the commutativity of the three diagrams
    \[
        \begin{tikzcd}[ampersand replacement=\&]
            {(C_m)^n} \&\& {C_n \times (C_m)^n} \\
            \&\& {C_m}
            \arrow["{v_k \times \Id}", from=1-1, to=1-3]
            \arrow["{\pi_k}"', from=1-1, to=2-3]
            \arrow["{(-_1)[-_2]}", from=1-3, to=2-3]
        \end{tikzcd}
        \qquad
        \begin{tikzcd}[ampersand replacement=\&]
            {C_n} \&\& {C_n \times (C_n)^n} \\
            \&\& {C_n}
            \arrow["{\Id \times v_1 \times \dots \times v_n}", from=1-1, to=1-3]
            \arrow["\Id"', from=1-1, to=2-3]
            \arrow["{(-_1)[-_2]}", from=1-3, to=2-3]
        \end{tikzcd}
    \]
    \[
        \begin{tikzcd}[ampersand replacement=\&]
            {C_n \times (C_m)^n \times (C_l)^m} \&\&\&\& {C_m \times (C_l)^m} \\
            {C_n \times (C_m \times (C_l)^m)^n} \&\&\& {C_n \times (C_l)^n} \& {C_l}
            \arrow["{(-_1)[-_2] \times \Id}", from=1-1, to=1-5]
            \arrow[from=1-1, to=2-1]
            \arrow["{(-_1)[-_2]}", from=1-5, to=2-5]
            \arrow["{\Id \times ((-_1)[-_2])^n}"', from=2-1, to=2-4]
            \arrow["{(-_1)[-_2]}"', from=2-4, to=2-5]
        \end{tikzcd}
    \]
    This yields a category of $\V$-""enriched clones"".
\end{remark}

In the rest of this section,
we define the "clones"~$\Tree(\Sigma)$ and~$\Endo(Q)$,
which emphasize the relation with deterministic, bottom-up tree automata,
and we show how to encode monoids~$M$ and sets~$X$ as the "clones"~$\Act(M)$ and~$\Cst(X)$.

\begin{definition}
    \AP\label{def:endo}
    For any set~$Q$, the clone~$\Endo(Q)$ is defined as the one with the sets
    \begin{align*}
        \intro*\Endo(Q)
        \quad & :=\quad
        \Set(Q^n, Q)
        \qquad\text{for any $n \in \N$}
        \\
              & \ =\quad
        \{\text{set-theoretic functions $Q^n \to Q$}\}
    \end{align*}
    equipped with "variables"~$v_k \in \Endo(Q)_n$ defined as
    \[
        v_k
        \quad:\quad
        (q_1, \dots, q_n)
        \ \longmapsto\
        q_k
    \]
    and with "substitution" function
    \[
        f[g_1, \dots, g_n]
        \quad:\quad
        \vec{q}
        \ \longmapsto\
        f(g_1(\vec{q}), \dots, g_n(\vec{q}))
    \]
\end{definition}

\begin{definition}
    \AP\label{def:tree}
    A ""ranked alphabet""~$\Sigma$ is a finite set of letters with finite arities
    \[
        \Sigma
        \quad:=\quad
        \{a_1 : n_1, \dots, a_l : n_l\}
    \]
    The clone $\Tree(\Sigma)$ is defined as the one with sets
    \[
        \intro*\Tree(\Sigma)_n
        \quad:=\quad
        \{\text{ranked trees on $\Sigma$ with shared variables in $v_1, \dots, v_n$}\}
    \]
    and with variables the leaf trees, and substitution given by tree grafting.
\end{definition}

\begin{remark}
    \label{rmk:endo-automata}
    The "clone"~$\Tree(\Sigma)$ has a universal property:
    it is the free clone on the generators in the "ranked alphabet"~$\Sigma$.
    This means that if $\Sigma := \{a_1 : n_1, \dots, a_l : n_l\}$,
    then for every "clone"~$C$ there is a bijection between the sets
    \[
        \Clone(\Tree(\Sigma), C)
        \quad\cong\quad
        C_{n_1} \times \dots \times C_{n_l}
    \]
    which is natural in the "clone"~$C$.
    In particular, when $C$ is a clone of the form $\Endo(Q)$ for a set~$Q$,
    we obtain that clone morphisms
    \[
        \delta
        \quad:\quad
        \Tree(\Sigma)
        \ \longto\
        \Endo(Q)
    \]
    are in bijection with tuples of transition functions
    \[
        \delta_{a_1} \ :\ Q^{n_1} \to Q
        \quad,\quad\dots\quad,\quad
        \delta_{a_l} \ :\ Q^{n_l} \to Q
    \]
    which amounts to the structure of a
    bottom-up, deterministic automaton without final states
    with $Q$ as set of states.
\end{remark}

\begin{definition}
    \AP\label{def:act}
    For any monoid~$M$, the clone~$\Act(M)$ is defined as the family of sets
    \[
        \intro*\Act(M)_n
        \quad:=\quad
        M \times \{x_1, \dots, x_n\}
        \qquad\text{for any $n \in \N$}
    \]
    equipped with "variables"~$v_k \in \Act(M)_n$ defined as
    \[
        v_k
        \quad:=\quad
        (1_M, x_k)
    \]
    and with "substitutions"
    \[
        (a, x_l)
        \left[
            (b_1, x_{k_1}), \dots, (b_n, x_{k_m})
            \right]
        \quad:=\quad
        (a \cdot b_l\ ,\ x_{k_l})
    \]
\end{definition}

\AP
The assignment of \Cref{def:act} forms a functor $\Act : \Mon \to \Clone$
that is fully faithful,
hence allowing us to think of monoids as certain clones.
We now describe how it appears in an adjunction.

\begin{proposition}
    \label{prop:act-adjunction}
    We have an adjunction
    \[
        \begin{tikzcd}[ampersand replacement=\&]
            \Mon \&\& \Clone
            \arrow[""{name=0, anchor=center, inner sep=0}, "\Act", shift left=2, from=1-1, to=1-3]
            \arrow[""{name=1, anchor=center, inner sep=0}, "{C_1\,\mapsfrom\,C}", shift left=2, from=1-3, to=1-1]
            \arrow["\dashv"{anchor=center, rotate=-90}, draw=none, from=0, to=1]
        \end{tikzcd}
    \]
    which is coreflective, i.e. the left adjoint~$\Act$ is fully faithful.
\end{proposition}

\begin{proof}
    Let $M$ be a monoid, $C$ be a "clone".
    We show that every monoid homomorphism~$\varphi : M \to C_1$ lifts in a unique way as a clone morphism~$\psi : \Act(M) \to C$ such that the following diagram commutes
    \[
        \begin{tikzcd}[ampersand replacement=\&]
            {\Act(M)_1} \\
            M \& {C_1}
            \arrow["{\psi_1}", from=1-1, to=2-2]
            \arrow["\cong"{marking, allow upside down}, draw=none, from=2-1, to=1-1]
            \arrow["\varphi"', from=2-1, to=2-2]
        \end{tikzcd}
    \]
    For every $n \in \N$ and every $(a, v_k) \in \Act(M)_n$, we have
    \[
        (a, x_k)
        \ =\
        \underbrace{(a, x_1)}_{\text{in } \Act(M)_1}[(1_M, x_k)]
    \]
    so for any such clone morphism~$\psi$, we have that
    \[
        \psi_n(a, x_k)
        \ =\
        \psi_1(a, x_1)[\psi_n(1_M, x_k)]
        \ =\
        \varphi(a)[v_k]
    \]
    so $\psi$ is determined by $\varphi$.
    Conversely, taking this equation as a definition of $\psi_n$ for every $n \in \N$ yields the unique extension of $\varphi$ as a clone morphism~$\psi : \Act(M) \to C$.
    The coreflectiveness amounts to the fact that the unit is an isomorphism.
\end{proof}

\begin{definition}
    \AP\label{def:cst}
    For any set~$X$, the clone~$\Cst(X)$ is defined as the family of sets
    \[
        \intro*\Cst(X)_n
        \quad:=\quad
        X \sqcup \{v_1, \dots, v_n\}
        \qquad\text{for any $n \in \N$}
    \]
    together with "variables"~$v_1, \dots, v_n$ and with "substitutions"
    \begin{align*}
        x[c_1, \dots c_n]
        \quad & :=\quad
        x
        \qquad\text{for $x \in X$}
        \\
        v_k[c_1, \dots, c_n]
        \quad & :=\quad
        c_k
        \qquad\text{for $1 \le k \le n$}
    \end{align*}
\end{definition}

In the same way as in \Cref{prop:act-adjunction},
we have the following adjunction.

\begin{proposition}
    \label{prop:cst-adjunction}
    We have an adjunction
    \[
        \begin{tikzcd}[ampersand replacement=\&]
            \Set \&\& \Clone
            \arrow[""{name=0, anchor=center, inner sep=0}, "\Cst", shift left=2, from=1-1, to=1-3]
            \arrow[""{name=1, anchor=center, inner sep=0}, "{C_0\,\mapsfrom\,C}", shift left=2, from=1-3, to=1-1]
            \arrow["\dashv"{anchor=center, rotate=-90}, draw=none, from=0, to=1]
        \end{tikzcd}
    \]
    which is coreflective, i.e. the left adjoint~$\Cst$ is fully faithful.
\end{proposition}

\begin{proof}
    Let $X$ be a set, $C$ be a "clone". We show that every set-theoretic function~$f : X \to C_0$ lifts in a unique way as a clone morphism~$\psi : \Cst(X) \to C$ such that the following diagram commutes
    \[
        \begin{tikzcd}[ampersand replacement=\&]
            {\Cst(X)_0} \\
            X \& {C_0}
            \arrow["{\psi_0}", from=1-1, to=2-2]
            \arrow["\cong"{marking, allow upside down}, draw=none, from=2-1, to=1-1]
            \arrow["f"', from=2-1, to=2-2]
        \end{tikzcd}
    \]
    For every $n \in \N$ and every $x \in X$, we have
    \[
        \underbrace{x}_{\text{in } \Cst(X)_n}
        \ =\
        \underbrace{x}_{\text{in } \Cst(X)_0}[\,]
    \]
    so for any such clone morphism~$\psi$, we obtain that
    \begin{align*}
        \psi_n(x)
        \  & =\
        \psi_0(x)[\,]
        \ =\
        f(x)[\,]
        \\
        \psi_n(v_k)
        \  & =\
        v_k
        \hspace{4cm}\text{for every $1 \le k \le n$}
    \end{align*}
    so $\psi$ is determined by $f$.
    Taking these two equations as a definition for $\psi$ yields the unique extension of $f$ as a clone morphism~$\psi : \Cst(X) \to C$.
    The coreflectiveness amounts to the fact that the unit is an isomorphism.
\end{proof}

\section{A functorial viewpoint on clones and related structures}
\label{sec:functorial-approach}

In this section,
we describe a categorical approach on "clones" based on the notion of Lawvere theory,
which induces naturally the two fully faithful functors
\[
    \Act
    \quad:\quad
    \Mon
    \ \longto\
    \Clone
    \qquad\text{and}\qquad
    \Cst
    \quad:\quad
    \Set
    \ \longto\
    \Clone
\]
which were described in \Cref{sec:clones},
rooted in universal algebra and functorial semantics.
We stress the fact that this approach is directly related to the monadic approach developed by \Bojanczyk~\cite{DBLP:conf/dlt/Bojanczyk15} and then later refined by \Bojanczyk, Klein and Salamanca~\cite{Bojaczyk2023}
as our "theories" yield finitary monads on $\Set$ and $\Set^\N$.
This perspective will turn out to be crucial in the further developments in \Cref{sec:profinite-completion-isos}.
For that matter, we now make precise the notion of theory we consider in this paper.

\begin{definition}
    \AP\label{def:finite-product-theory}
    A ""theory"" is a small category~$\T$ with finite cartesian products.

    A ""model""~$X$ of the "theory"~$\T$ is a product-preserving functor~$X : \T \to \Set$,
    and a morphism of models is a natural transformation.
    These form a category, that we write $\intro*\Mod(\T)$.
\end{definition}

\begin{remark}
    The functorial approach to universal algebra,
    as pioneered by Lawvere~\cite{LawvereThesis},
    has a slightly different notion of "theory" which makes the notion of arity explicit.
    Indeed, a Lawvere theory is a category~$\T$ with finite cartesian products, equipped with a functor
    \[
        \op{\FinSet}
        \ \longto\
        \T
    \]
    which is identity-on-objects and which preserves finite cartesian products.

    In this article, we do not use the notion of arity
    so our notion of "theory" described in \Cref{def:finite-product-theory}
    is simply a small category with finite products.
    This is a standard notion,
    considered for example in the book of Adámek, Rosický, and Vitale~\cite[Definition~1.1]{Admek2010}.
\end{remark}

We now introduce the three central "theories" of this article,
$\TMon$, $\TSet$, and $\TClone$,
whose "models" are monoids, sets, and "clones" respectively.

\begin{definition}
    \AP\label{def:theory-sets}
    We write $\intro*\TMon$ for the cartesian category
    whose objects are natural numbers
    with cartesian product given by addition,
    and with hom-sets
    \begin{align*}
        \TMon(n, 1)
        \quad & :=\quad
        \{a_1, \dots, a_n\}^*
        \\
              & \ =\quad
        \{\text{words with letters among $a_1$, \dots, $a_n$}\}
    \end{align*}
    The "models" of this "theory" are monoids,
    i.e. there is an equivalence of categories
    \[
        \Mon
        \quad\cong\quad
        \Mod(\TMon)
    \]
\end{definition}

\begin{definition}
    \AP\label{def:theory-monoids}
    We write $\intro*\TSet$ for the cartesian category
    whose objects are natural numbers
    with cartesian product given by addition,
    and with hom-sets
    \[
        \TSet(n, 1)
        \quad:=\quad
        \{a_1, \dots, a_n\}
    \]
    The "models" of this "theory" are sets,
    i.e. there is an equivalence of categories
    \[
        \Set
        \quad\cong\quad
        \Mod(\TSet)
    \]
\end{definition}

\begin{definition}
    \AP\label{def:theory-clones}
    We write $\intro*\TClone$ for the cartesian category
    whose objects are "ranked alphabets"
    with cartesian product given by concatenation,
    and with hom-sets
    \begin{align*}
        \TClone(\Sigma, \{b : n\})
        \quad & :=\quad
        \Tree(\Sigma)_n
        \\
              & \ =\quad
        \{
        \text{ranked trees on $\Sigma$ with shared variables in $v_1$, \dots, $v_n$}\}
    \end{align*}
    This category is the "theory" of "clones",
    i.e. there is an equivalence of categories
    \[
        \Clone
        \quad\cong\quad
        \Mod(\TClone)
    \]
\end{definition}

\begin{remark}
    \label{rmk:clone-tclone}
    The category~$\TClone$ has been introduced by Fiore and Mahmoud
    in their seminal study of second-order algebraic theories~\cite{10.1007/978-3-642-15155-2_33, DBLP:journals/corr/FioreM14},
    where it is written $\mathbf{M}$.
    The fact that this category is the "theory" of clones is a corollary of their work,
    see~\cite[Theorem~7.6]{DBLP:journals/corr/FioreM14}.
    This category has then been identified as the order $2$ fragment of the simply-typed $\lambda$-calculus by Arkor and McDermott in their work on higher-order algebraic theories~\cite{arkor_2022_phd}.

    For the reader's convenience,
    we outline here the equivalence,
    described in the work of Fiore and Mahmoud,
    between the two categories~$\Clone$ and $\Mod(\TClone)$.
    \begin{itemize}
        \item To any "clone"~$C$,
              we associate the "model"~$X : \TClone \to \Set$
              that sends any "ranked alphabet"~$\Sigma$ defined as $\{a_1 : n_1, \dots, a_l : n_l\}$ on the set
              \[
                  X(\Sigma)
                  \quad:=\quad
                  C_{n_1} \times \dots \times C_{n_l}
              \]
              and any tree in the hom-set~$\TClone(\Sigma, \{b : n\})$ on the set-theoretic function
              \[
                  C_{n_1} \times \dots \times C_{n_l}
                  \ \longto\
                  C_{n}
              \]
              computed inductively on the tree using the "variables" and "substitution" of $C$.
        \item To any "model"~$X : \TClone \to \Set$,
              we associate the "clone"~$C$ whose sets are defined as
              \[
                  C_{n}
                  \quad:=\quad
                  X(\{b : n\})
                  \qquad\text{for any $n \in \N$}
              \]
              equipped with the "variables" defined as the image by $X$ of the leaf trees
              \[
                  v_k
                  \quad\in\quad
                  \TClone\big(\varnothing, \{b : n\}\big)
              \]
              and with the "substitutions" defined as the image by $X$ of the tree
              \[
                  \begin{tikzcd}[ampersand replacement=\&, row sep=tiny, column sep=tiny]
                      \&\&\& {a_0} \\
                      \& {a_1} \&\& \dots \&\& {a_n} \\
                      {v_1} \& \dots \& {v_m} \&\& {v_1} \& \dots \& {v_m}
                      \arrow[no head, from=1-4, to=2-2]
                      \arrow[no head, from=1-4, to=2-4]
                      \arrow[no head, from=1-4, to=2-6]
                      \arrow[no head, from=2-2, to=3-1]
                      \arrow[no head, from=2-2, to=3-2]
                      \arrow[no head, from=2-2, to=3-3]
                      \arrow[no head, from=2-6, to=3-5]
                      \arrow[no head, from=2-6, to=3-6]
                      \arrow[no head, from=2-6, to=3-7]
                  \end{tikzcd}
                  \in\quad
                  \TClone\big(\{a_0 : n, a_1 : m, \dots, a_n : m\}, \{b : n\}\big)
              \]
    \end{itemize}
    It is worth noting that for any "ranked alphabet"~$\Sigma$,
    seen as an object of $\TClone$,
    this equivalence between the categories~$\Clone$ and~$\Mod(\TClone)$
    relates free "clones"~$\Tree(\Sigma)$ from \Cref{def:tree} with representable "models"
    \[
        \TClone(\Sigma, -)
        \quad:\quad
        \TClone
        \ \longto\
        \Set
    \]
\end{remark}

We now proceed to relate the three "theories" that we have introduced so far.
At the heart of universal algebra lies the construction of free structures,
which can be built syntactically by inductively adding operations and quotiented by the required equations.
This construction of free structures is adjoint to the operation of forgetting parts of the structure.
One of the benefits of the notion of "theory" considered in this article
is that it puts this free/forgetful adjunction,
central to the field of universal algebra,
in the broader context of category theory.
Indeed, we then have the following dictionary:
\[
    \begin{tikzcd}[ampersand replacement=\&]
        {\text{free construction}} \&\& {\text{left Kan extension}} \\
        {\text{forgetful functor}} \&\& {\text{precomposition}}
        \arrow["{\text{seen as}}", between={0.1}{0.9}, dotted, from=1-1, to=1-3]
        \arrow["{\text{adjoints}}"', dashed, no head, from=1-1, to=2-1]
        \arrow["{\text{adjoints}}", dashed, no head, from=1-3, to=2-3]
        \arrow["{\text{seen as}}"', between={0.1}{0.9}, dotted, from=2-1, to=2-3]
    \end{tikzcd}
\]
This leads us to the following fact,
which is a special case of a result by Lawvere~\cite{LawvereThesis}.

\begin{proposition}
    \AP\label{prop:pull-push-theories}
    Any cartesian product-preserving functor
    \[
        F
        \quad:\quad
        \T
        \ \longto\
        \T'
    \]
    between "theories" induces an adjunction
    \[
        \begin{tikzcd}[ampersand replacement=\&]
            {\Mod(\T)} \&\& {\Mod(\T')}
            \arrow[""{name=0, anchor=center, inner sep=0}, "{\Lan_F}", shift left=2, from=1-1, to=1-3]
            \arrow[""{name=1, anchor=center, inner sep=0}, "{(-) \circ F}", shift left=2, from=1-3, to=1-1]
            \arrow["\dashv"{anchor=center, rotate=-90}, draw=none, from=0, to=1]
        \end{tikzcd}
    \]
    given that the left Kan extension of a product-preserving functor is product-preserving.

    Moreover, if the functor~$F$ is fully faithful,
    then so is the left adjoint~$\Lan_{F}$.
\end{proposition}

As a first illustration,
we now recall how to get back the usual free/forgetful adjunction between the categories of sets and monoids.

\begin{example}
    \label{ex:adj-set-mon}
    We consider the cartesian product-preserving functor
    \[
        F
        \quad:\quad
        \TSet
        \ \longto\
        \TMon
    \]
    which is the identity on objects,
    and whose action on morphisms is the inclusion
    \[
        \{a_1, \dots, a_n\}
        \quad\subseteq\quad
        \{a_1, \dots, a_n\}^*
    \]
    of letters into words.
    Under the identifications provided by the equivalences of categories
    \[
        \Set
        \quad\cong\quad
        \Mod(\TSet)
        \qquad\text{and}\qquad
        \Mon
        \quad\cong\quad
        \Mod(\TMon)
    \]
    the associated adjunction,
    obtained from \Cref{prop:pull-push-theories},
    is the usual free/forgetful adjunction
    \[
        \begin{tikzcd}[ampersand replacement=\&]
            \Set \&\&\& \Mon
            \arrow[""{name=0, anchor=center, inner sep=0}, "{(-)^*}", shift left=2, from=1-1, to=1-4]
            \arrow[""{name=1, anchor=center, inner sep=0}, "{\text{underlying set}}", shift left=2, from=1-4, to=1-1]
            \arrow["\dashv"{anchor=center, rotate=-90}, draw=none, from=0, to=1]
        \end{tikzcd}
    \]
\end{example}

We now proceed to revisit the adjunctions from
\Cref{prop:act-adjunction} and \Cref{prop:cst-adjunction},
whose left adjoints are the fully faithful functors $\Act$ and $\Cst$ encoding monoids and sets as "clones",
from the categorical perspective of "theories".

\begin{example}
    \label{ex:adj-mon-clone}
    We define a cartesian product-preserving functor
    \[
        F
        \quad:\quad
        \TMon
        \ \longto\
        \TClone
    \]
    The functor~$F$ sends an object~$n$ of $\TMon$,
    i.e. a natural number,
    on the "ranked alphabet"
    \[
        F(n)
        \quad:=\quad
        \{a_1 : 1, \dots, a_n : 1\}
    \]
    consisting of $n$ unary generators.
    The functor~$F$ sends a morphism~$w \in \TMon(n, 1)$,
    i.e. a word~$w := a_{i_1} \dots a_{i_l} \in \{a_1, \dots, a_n\}^*$,
    on the filiform tree
    \[
        F(w)
        \quad:=\quad
        \begin{tikzcd}[ampersand replacement=\&, row sep=tiny, column sep=tiny]
            {a_{i_1}} \\
            \vdots \\
            {a_{i_l}} \\
            {v_1}
            \arrow[no head, from=1-1, to=2-1]
            \arrow[no head, from=2-1, to=3-1]
            \arrow[no head, from=3-1, to=4-1]
        \end{tikzcd}
        \qquad\in\qquad
        \Tree(\{a_1 : 1, \dots, a_n : 1\})_1
    \]
    Using the correspondence between "clones" and "models" of $\TClone$ of  \Cref{rmk:clone-tclone},
    we compute that precomposition by $F$ sends a "clone"~$C$
    on the monoid~$C_1$ of its unary operations.
    As a consequence,
    the adjunction associated by \Cref{prop:pull-push-theories} is the adjunction from \Cref{prop:act-adjunction}
    \[
        \begin{tikzcd}[ampersand replacement=\&]
            \Mon \&\& \Clone
            \arrow[""{name=0, anchor=center, inner sep=0}, "\Act", shift left=2, from=1-1, to=1-3]
            \arrow[""{name=1, anchor=center, inner sep=0}, "{C_1\,\mapsfrom\,C}", shift left=2, from=1-3, to=1-1]
            \arrow["\dashv"{anchor=center, rotate=-90}, draw=none, from=0, to=1]
        \end{tikzcd}
    \]
    The fully faithfulness of the functor~$F$ implies the one of $\Act$.
\end{example}

\begin{example}
    \label{ex:adj-set-clone}
    We define a cartesian product-preserving functor
    \[
        F
        \quad:\quad
        \TMon
        \ \longto\
        \TClone
    \]
    The functor~$F$ sends an object~$n$ of $\TSet$,
    i.e. a natural number,
    on the "ranked alphabet"
    \[
        F(n)
        \quad:=\quad
        \{a_1 : 0, \dots, a_n : 0\}
    \]
    consisting of $n$ unary generators.
    The functor~$F$ sends a morphism~$w \in \Set(n, 1)$,
    i.e. an element~$a_k$ for $1 \le k \le n$,
    on the leaf tree
    \[
        F(a_k)
        \quad:=\quad
        a_k
        \qquad\in\qquad
        \Tree(\{a_1 : 0, \dots, a_n : 0\})_0
    \]
    Again, the correspondence of \Cref{rmk:clone-tclone}
    shows that precomposition by $F$ sends a "clone"~$C$
    on the set~$C_0$ of its constants.
    As a consequence,
    the adjunction associated by \Cref{prop:pull-push-theories} is the adjunction from \Cref{prop:act-adjunction}
    \[
        \begin{tikzcd}[ampersand replacement=\&]
            \Set \&\& \Clone
            \arrow[""{name=0, anchor=center, inner sep=0}, "\Cst", shift left=2, from=1-1, to=1-3]
            \arrow[""{name=1, anchor=center, inner sep=0}, "{C_0\,\mapsfrom\,C}", shift left=2, from=1-3, to=1-1]
            \arrow["\dashv"{anchor=center, rotate=-90}, draw=none, from=0, to=1]
        \end{tikzcd}
    \]
    The fully faithfulness of the functor~$F$ implies the one of $\Cst$.
\end{example}

\section{Profinite completions of clones}
\label{sec:profinite-completion-clones}
In this section,
we recall the general notion of "codensity monad"
and its relation with ultrafilters and profinite words,
before introducing the "profinite completion of clones"
based on the notion of "locally finite clone".

\begin{definition}
    \AP\label{def:codensity-monad}
    For any functor~$I : \C_f \to \C$ between categories,
    its ""codensity monad""
    \[
        \intro*\cody{I}
        \quad:\quad
        \C
        \ \longto\
        \C
    \]
    is the right Kan extension of the functor~$I$ against itself,
    i.e.
    \[
        \begin{tikzcd}[ampersand replacement=\&]
            {\C_f} \& \C \& \C
            \arrow["I", from=1-1, to=1-2]
            \arrow[""{name=0, anchor=center, inner sep=0}, "I"', curve={height=30pt}, from=1-1, to=1-3]
            \arrow["{\cody{I}}", dashed, from=1-2, to=1-3]
            \arrow["\epsilon", between={0.2}{0.8}, Rightarrow, from=1-2, to=0]
        \end{tikzcd}
    \]
\end{definition}

The monad structure on the endofunctor~$\cody{I}$
can be built using the universal property of the right Kan extension:
the unit~$\eta : \Id \To \cody{I}$ is defined as the factorization
\[
    \begin{tikzcd}[ampersand replacement=\&]
        {\C_f} \& \C \& \C
        \arrow["I", from=1-1, to=1-2]
        \arrow[""{name=0, anchor=center, inner sep=0}, "I"', curve={height=30pt}, from=1-1, to=1-3]
        \arrow["\Id", from=1-2, to=1-3]
        \arrow[between={0.2}{0.8}, equals, from=1-2, to=0]
    \end{tikzcd}
    \qquad=\qquad
    \begin{tikzcd}[ampersand replacement=\&]
        {\C_f} \& \C \& \C
        \arrow["I", from=1-1, to=1-2]
        \arrow[""{name=0, anchor=center, inner sep=0}, "I"', curve={height=30pt}, from=1-1, to=1-3]
        \arrow[""{name=1, anchor=center, inner sep=0}, "{\cody{I}}"', from=1-2, to=1-3]
        \arrow[""{name=2, anchor=center, inner sep=0}, "\Id", curve={height=-24pt}, from=1-2, to=1-3]
        \arrow["\epsilon", between={0.2}{0.8}, Rightarrow, from=1-2, to=0]
        \arrow["\eta", between={0.2}{0.8}, Rightarrow, from=2, to=1]
    \end{tikzcd}
\]
and the multiplication~$\mu : \cody{I} \circ \cody{I} \To \cody{I}$ is defined as the factorization
\[
    \begin{tikzcd}[ampersand replacement=\&]
        {\C_f} \& \C \& \C \& \C
        \arrow["I", from=1-1, to=1-2]
        \arrow[""{name=0, anchor=center, inner sep=0}, curve={height=30pt}, from=1-1, to=1-3]
        \arrow[""{name=1, anchor=center, inner sep=0}, "I"', curve={height=50pt}, from=1-1, to=1-4]
        \arrow[curve={height=-30pt}, draw=none, from=1-1, to=1-4]
        \arrow[""{name=2, anchor=center, inner sep=0}, "{\cody{I}}", from=1-2, to=1-3]
        \arrow["{\cody{I}}", from=1-3, to=1-4]
        \arrow["\epsilon", between={0.2}{0.8}, Rightarrow, from=1-2, to=0]
        \arrow["\epsilon", between={0.6}{0.8}, Rightarrow, from=2, to=1]
    \end{tikzcd}
    \qquad=\qquad
    \begin{tikzcd}[ampersand replacement=\&]
        {\C_f} \& \C \&\& \C
        \arrow["I", from=1-1, to=1-2]
        \arrow[""{name=0, anchor=center, inner sep=0}, "I"', curve={height=30pt}, from=1-1, to=1-4]
        \arrow[""{name=1, anchor=center, inner sep=0}, draw=none, from=1-1, to=1-4]
        \arrow[""{name=2, anchor=center, inner sep=0}, "{\cody{I}}"', from=1-2, to=1-4]
        \arrow[""{name=3, anchor=center, inner sep=0}, "{\cody{I} \circ \cody{I}}", curve={height=-24pt}, from=1-2, to=1-4]
        \arrow["\epsilon", between={0.2}{0.8}, Rightarrow, from=1, to=0]
        \arrow["\mu", between={0.2}{0.8}, Rightarrow, from=3, to=2]
    \end{tikzcd}
\]

We stress the fact that,
in \Cref{def:codensity-monad},
$\C_f$ and $\C$ are any categories and $I : \C_f \to \C$ is any functor.
The notations suggest that $\C_f$ is the full subcategory of $\C$ consisting of finite objects and that $I$ is the associated inclusion,
as this will be the case in practice in \Cref{ex:ultrafilters} and \Cref{ex:profinite-mon},
but there is no such requirement when defining the general notion of "codensity monad".

A standard fact about right Kan extensions is that under mild hypothesis,
they can be computed by limits, see e.g.~\cite[Chapter X, \S3, Theorem~1]{MacLane1978},~\cite[Theorem~6.2.1]{riehl2017}.
This observation amounts to a first step towards profinite topological spaces,
given that these are limits of finite, discrete spaces.

\begin{proposition}
    \label{prop:codensity-limit}
    For any functor~$I : \C_f \to \C$,
    if $\C_f$ is an essentially small category
    and $\C$ is complete,
    then the "codensity monad"~$\cody{I}$ exists,
    and is computed as the limit
    \[
        \cody{I}(X)
        \quad\cong\quad
        \lim_{\substack{X' \in \C_f \\ X \to I X'}} I X'
        \qquad\text{for every object~$X$ of $\C$}
    \]
\end{proposition}

We now give two examples of well-known "codensity monads",
whose further investigation is the topic of \Cref{sec:profinite-completion-isos}.
Other examples of "codensity monads",
in relation to probability theory,
continuations, Isbell duality, and many other topics
can be found in~\cite{leinstercod, DILIBERTI2020106379, https://doi.org/10.48550/arxiv.2509.26197}.

\begin{example}
    \label{ex:ultrafilters}
    A classical result by Kennison and Gildenhuy~\cite{KENNISON1971317}
    states that the ultrafilter monad~$\ProSet : \Set \to \Set$
    is the "codensity monad" of the usual inclusion~$\FinSet \to \Set$.
    \[
        \begin{tikzcd}[ampersand replacement=\&]
            \FinSet \& \Set \& \Set
            \arrow[from=1-1, to=1-2]
            \arrow[""{name=0, anchor=center, inner sep=0}, curve={height=30pt}, from=1-1, to=1-3]
            \arrow[""{name=1, anchor=center, inner sep=0}, draw=none, from=1-1, to=1-3]
            \arrow[""{name=1p, anchor=center, inner sep=0}, phantom, from=1-1, to=1-3, start anchor=center, end anchor=center]
            \arrow["{\intro*\ProSet}", dashed, from=1-2, to=1-3]
            \arrow[between={0.3}{0.8}, Rightarrow, from=1p, to=0]
        \end{tikzcd}
    \]
\end{example}

\begin{example}
    \label{ex:profinite-mon}
    The inclusion~$\FinMon \to \Mon$ of finite monoids induces a "codensity monad"
    \[
        \begin{tikzcd}[ampersand replacement=\&]
            \FinMon \& \Mon \& \Mon
            \arrow[from=1-1, to=1-2]
            \arrow[""{name=0, anchor=center, inner sep=0}, curve={height=30pt}, from=1-1, to=1-3]
            \arrow[""{name=1, anchor=center, inner sep=0}, draw=none, from=1-1, to=1-3]
            \arrow[""{name=1p, anchor=center, inner sep=0}, phantom, from=1-1, to=1-3, start anchor=center, end anchor=center]
            \arrow["{\intro*\ProMon{(-)}}", dashed, from=1-2, to=1-3]
            \arrow[between={0.3}{0.8}, Rightarrow, from=1p, to=0]
        \end{tikzcd}
    \]
    called the profinite completion of monoids.
    When applied to the free monoid on a finite set~$\Sigma^*$,
    it yields the monoid of profinite words~$\ProMon{\Sigma^*}$.
\end{example}

These two examples show that the ultrafilter monad and the profinite completion of monoids arise as the "codensity monads" associated to appropriate notions of finite structures,
i.e. finite sets and finite monoids respectively.

We now return to "clones",
and we introduce the notion of finiteness that we consider.

\begin{definition}
    \AP\label{def:locally-finite-clone}
    A "clone"~$C$ is ""locally finite""
    when the sets~$C_n$ are finite for every~$n \in \N$.
    We write $\intro*\FinClone$ for the full subcategory of $\Clone$ whose objects are "locally finite clones".
\end{definition}

\begin{remark}
    \label{rmk:infinite-trees}
    We stress the fact that, in the adjacent case of infinite trees,
    a subtle notion of finite structure
    -- requiring an hypothesis of finite generation --
    must be considered in order to establish a relation with regular languages,
    see e.g.~\cite[\S3]{lmcs:4747}.
    Yet, this article treats the case of finite trees, not infinite ones.
    Moreover, we work in a setting where
    automata structures (without final states) associated to a "ranked alphabet"~$\Sigma$ on a set~$Q$
    correspond to morphisms of clones
    \[
        \Tree(\Sigma)
        \ \longto\
        \Endo(Q)
    \]
    It is therefore natural that
    the "clone"~$\Endo(Q)$ verifies our chosen finiteness criterion,
    i.e. it is a "locally finite clone",
    if and only if the set~$Q$ is finite.
\end{remark}

Given that the category $\FinClone$ is essentially small,
and that the category $\Clone$ is complete,
we use \Cref{prop:codensity-limit} to define the "profinite completion of clones".

\begin{definition}
    \AP\label{def:profinite-completion-clones}
    We call ""profinite completion of clones"" the "codensity monad"
    \[
        \begin{tikzcd}[ampersand replacement=\&]
            \FinClone \& \Clone \& \Clone
            \arrow[from=1-1, to=1-2]
            \arrow[""{name=0, anchor=center, inner sep=0}, curve={height=30pt}, from=1-1, to=1-3]
            \arrow[""{name=0p, anchor=center, inner sep=0}, phantom, from=1-1, to=1-3, start anchor=center, end anchor=center, curve={height=30pt}]
            \arrow[""{name=1, anchor=center, inner sep=0}, draw=none, from=1-1, to=1-3]
            \arrow[""{name=1p, anchor=center, inner sep=0}, phantom, from=1-1, to=1-3, start anchor=center, end anchor=center]
            \arrow["{\intro*\ProClone{(-)}}", dashed, from=1-2, to=1-3]
            \arrow[between={0.3}{0.8}, Rightarrow, from=1p, to=0p]
        \end{tikzcd}
    \]
    induced by the inclusion~$\FinClone \to \Clone$.
\end{definition}

\begin{remark}
    We find it instructive to unfold the abstract formulation,
    in terms of limits following \Cref{prop:codensity-limit},
    of the "profinite completion" of clones.
    For any "clone"~$C$ and $n \in \N$,
    an element~$u \in (\ProClone{C})_n$ amounts to a family of functions
    \[
        u_D
        \quad:\quad
        \Clone(C, D)
        \ \longto\
        D_n
        \qquad\text{where $D$ ranges over all "locally finite clones"}
    \]
    which is natural in $D$, i.e. for any $\varphi \in \FinClone(D, D')$,
    the following square commutes
    \[
        \begin{tikzcd}[ampersand replacement=\&]
            {\Clone(C, D)} \&\& {\Clone(C, D')} \\
            {D_n} \&\& {D'_n}
            \arrow["{\varphi \circ (-)}", from=1-1, to=1-3]
            \arrow["{u_D}"', from=1-1, to=2-1]
            \arrow["{u_{D'}}", from=1-3, to=2-3]
            \arrow["{\varphi_n}"', from=2-1, to=2-3]
        \end{tikzcd}
    \]
    In this way,
    we think of $u$ as an oracle with $n$ variables on the clone~$C$ with respect to all "locally finite clones".
\end{remark}

\section{Relating the profinite completions}
\label{sec:profinite-completion-isos}

In this section, we relate the profinite completions of monoids and the ultrafilter monad on sets with the profinite completion of clones.
For this, we develop a general categorical framework which treats both cases at the same time in a generic way.

Our starting point is the observation that the two functors
\[
    \Act
    \quad:\quad
    \Mon
    \ \longto\
    \Clone
    \qquad\text{and}\qquad
    \Cst
    \quad:\quad
    \Set
    \ \longto\
    \Clone
\]
have very simple expressions, written in \Cref{def:act} and \Cref{def:cst},
at odds with the left adjoints often came across in universal algebra typically involving infinite sums and quotients.

\AP
Admitting such a simple description is usually a feature of right adjoints,
but none of these two functors are right adjoints themselves.
Indeed, right adjoints preserve terminal objects,
yet applying the functors $\Act$ and $\Cst$ to the terminal monoid and the terminal set
yields "clones", which we write $\Var$ and $\Err$ respectively,
which are not terminal.
More concretely,
for any natural number $n \in \N$ the sets of operations of these two "clones" are
\begin{align*}
    \intro*\Var_n
    \quad:=\quad
    \{v_1, \dots, v_n\}
    \qquad\text{and}\qquad
    \intro*\Err_n
    \quad:=\quad
    \{\bullet, v_1, \dots, v_n\}
\end{align*}
However, this lack of preservation of the terminal object by the functors~$\Act$ and $\Cst$ is in some sense the only obstruction for them to be right adjoints.
We now recall the notion of "parametric right adjoint",
introduced by Street~\cite{Street2000},
which provides a formal ground for these ideas.

\begin{definition}
    \AP\label{def:pra}
    A functor~$G : \C \to \D$,
    where the category~$\C$ has a terminal object~$1_\C$,
    is called a ""parametric right adjoint""
    when the associated functor
    \[
        \begin{matrix}
            \C & \longto     & \D/G(1_{\C})
            \\
            X  & \longmapsto & G(!_X) : GX \to G(1_{\C})
        \end{matrix}
    \]
    is itself a right adjoint, where $\D/Y$ denotes the slice category.

    In particular, a "parametric right adjoint" preserves connected limits,
    as the projection functor~$\D/Y \to \D$ does.
\end{definition}

In order to show that functors are "parametric right adjoints",
we will leverage the power given by the notion of "theory" of \Cref{sec:functorial-approach},
and in particular \Cref{prop:pull-push-theories} which amounts to the construction of left adjoints.
For this, we now describe a construction on "theories" which give an account of taking slices on categories of "models",
which generalizes the classical equivalence of categories
\[
    \Set^I
    \quad\cong\quad
    \Set/I
\]
for any set $I$ seen as a discrete category, see e.g.~\cite[p.~28]{sheavesingeometryandlogic}.

\begin{definition}
    \AP\label{def:coslice-theory}
    For any "theory"~$T$ and "model"~$X : \T \to \Set$,
    we write $\intro*\coslice{X}{\T}$ for the category of elements of $X$,
    whose objects are pairs
    \[
        (n, x)
        \quad \text{where $n$ is an object of $\T$ and $x \in X(n)$}
    \]
    and whose hom-sets are
    \[
        \coslice{X}{\T}\Big((n, x), (n', x')\Big)
        \quad:=\quad
        \{f \in \T(n, n') \mid X(f)(x) = x'\}
    \]
    There is a canonical functor
    \[
        \Mod(\coslice{X}{\T})
        \ \longto\
        \Mod(\T) / X
    \]
    which sends a "model"~$Y : \coslice{X}{\T} \to \Set$
    on the "model"~$Y' : \T \to \Set$ defined as
    \[
        Y'(n)
        \quad:=\quad
        \sum_{x \in X(n)}
        Y(n, x)
    \]
    together with the natural transformation~$Y' \To X$
    whose components are the projections
    \[
        \sum_{x \in X(n)}
        Y(n, x)
        \ \longto\
        X(n)
        \qquad\text{for every $n \in \T$}
    \]
    This yields an equivalence of categories
    \[
        \Mod(\coslice{X}{\T})
        \quad\cong\quad
        \Mod(\T) / X
    \]
\end{definition}

\begin{remark}
    \label{rmk:coslice-slice-duality}
    For any "model"~$X$ of $\T$ which is a representable functor
    \[
        X
        \ :=\
        \T(n, -)
        \quad:\quad
        \T
        \ \longto\
        \Set
    \]
    the category~$\coslice{X}{\T}$ is the coslice of $\T$ on the object~$n$ of $\T$ representing $X$.
    The fact that the coslice is the "theory" of the slice of "models" is in accordance with the general principle that theories and categories of models are dual one to the other.
\end{remark}

\begin{remark}
    \AP\label{rmk:coslice-pullback}
    Equivalently, the category~$\coslice{X}{\T}$ can be described as the pullback
    \[
        \begin{tikzcd}[ampersand replacement=\&]
            {\coslice{X}{\T}} \& \Setpt \\
            \T \& \Set
            \arrow[from=1-1, to=1-2]
            \arrow[from=1-1, to=2-1]
            \arrow["\lrcorner"{anchor=center, pos=0.125}, draw=none, from=1-1, to=2-2]
            \arrow[from=1-2, to=2-2]
            \arrow["X", from=2-1, to=2-2]
        \end{tikzcd}
        \qquad\text{where $\intro*\Setpt$ is the category of pointed sets.}
    \]
\end{remark}

\AP
We now apply the construction of the "theory"~$\coslice{X}{\T}$ to our approach of clones,
by taking $\T$ to be $\TClone$ and the "model"~$X$ to be either the "clone"~$\Var \cong \Act(1_{\Mon})$ or the "clone"~$\Err \cong \Cst(1_{\Set})$.
This is done in the statements \Cref{prop:act-pra} and \Cref{prop:cst-pra},
which amount to direct computations.

\begin{proposition}
    \label{prop:act-pra}
    Under the equivalence of \Cref{def:coslice-theory}
    \[
        \Clone/\Var
        \quad\cong\quad
        \Mod(\coslice{\Var}{\TClone})
    \]
    the canonical functor
    \[
        \Mon
        \ \longto\
        \Clone/\Var
    \]
    is the precomposition by a product-preserving functor
    \[
        \coslice{\Var}{\TClone}
        \ \longto\
        \TMon
    \]
    Therefore, $\Act : \Mon \to \Clone$ is a "parametric right adjoint".
\end{proposition}

\begin{proposition}
    \label{prop:cst-pra}
    Under the equivalence of \Cref{def:coslice-theory}
    \[
        \Clone/\Err
        \quad\cong\quad
        \Mod(\coslice{\Err}{\TClone})
    \]
    the canonical functor
    \[
        \Set
        \ \longto\
        \Clone/\Err
    \]
    is the precomposition by a product-preserving functor
    \[
        \coslice{\Err}{\TClone}
        \ \longto\
        \TSet
    \]
    Therefore, $\Cst : \Set \to \Clone$ is a "parametric right adjoint".
\end{proposition}

A fundamental fact of "parametric right adjoints" is that they preserve connected limits.
This is a vast class of limits, which includes in particular cofiltered limits
which are of prime importance in the theory of profinite structures,
see e.g.~\cite[Chapter~VI]{johnstone}.
In particular, the index categories of the limits used in \Cref{prop:codensity-limit} to compute right Kan extensions are cofiltered
when the functor
\[
    I
    \quad:\quad
    \C_f
    \ \longto\
    \C
\]
is the inclusion of finite models into $\Mod(\T)$, for any theory~$\T$.
Therefore, in the presence of a "parametric right adjoint"~$G$,
we obtain an isomorphism
\[
    G \circ \cody{I}
    \quad\cong\quad
    \Ran_{I} (G \circ I)
\]
This applies to the encodings of monoids and sets as "clones",
in which case we get a family of isomorphisms of clones
\begin{align*}
    \Act \circ \ProMon{(-)}
    \quad & \cong\quad
    \Ran_I (\Act \circ I)
    \qquad\text{for the inclusion $I : \FinMon \to \Mon$}
    \\
    \Cst \circ \ProSet
    \quad & \cong\quad
    \Ran_I (\Cst \circ I)
    \qquad\text{for the inclusion $I : \FinSet \to \Set$}
\end{align*}

Having established these identifications,
we are now in a position to give an abstract criterion of preservation of right Kan extensions.

\begin{theorem}
    \label{thm:abstrac-criterion}
    Consider the data displayed in the diagram
    \[
        \begin{tikzcd}[ampersand replacement=\&]
            {\C_f} \& \C \\
            {\D_f} \& \D
            \arrow["I", from=1-1, to=1-2]
            \arrow[""{name=0, anchor=center, inner sep=0}, "{L_f}", shift left=2, from=1-1, to=2-1]
            \arrow["L", from=1-2, to=2-2]
            \arrow[""{name=1, anchor=center, inner sep=0}, "{R_f}", shift left=2, from=2-1, to=1-1]
            \arrow["J"', from=2-1, to=2-2]
            \arrow["\dashv"{anchor=center, rotate=-180}, draw=none, from=0, to=1]
        \end{tikzcd}
    \]
    where $\C_f$ and $\D_f$ are essentially small,
    $\C$ and $\D$ are complete,
    $L \circ I = J \circ L_f$,
    and the adjunction~$L_f \dashv R_f$ is coreflective
    i.e. the unit $\Id \To R_f \circ L_f$ is a natural isomorphism.
    From the data of a natural bijection between the hom-sets
    \[
        \D(L, J)
        \quad\cong\quad
        \D(\Id, I \circ R_f)
    \]
    we construct a natural isomorphism
    \[
        \Ran_I (L \circ I)
        \quad\cong\quad
        (\Ran_J J) \circ L
    \]
\end{theorem}

\begin{proof}
    We start by remarking that the data of a natural bijection
    \[
        \D(L, J)
        \quad\cong\quad
        \D(\Id, I \circ R_f)
    \]
    can be rephrased as the data of two natural transformations
    \[
        \widetilde{(-)}
        \quad:\quad
        \D(L, J)
        \ \To\
        \D(\Id, I \circ R_f)
        \qquad\text{and}\qquad
        \epsilon
        \quad:\quad
        L \circ I \circ R_f
        \ \To\
        J
    \]
    From this, we build a natural transformation
    \[
        \Ran_I (L \circ I)
        \quad\To\quad
        (\Ran_J J) \circ L
    \]
    whose components, for any object $X$ of $\C$ are the morphisms
    \[
        \lim_{X \to I X_f} L I X_f
        \ \longto\
        \lim_{L X \to J Y_f} L Y_f
    \]
    defined by the universal property of the limit as coming from the family
    \[
        \begin{tikzcd}[ampersand replacement=\&]
            {\displaystyle\lim_{X \to I X_f} L I X_f} \& {L I R_f Y_f} \& {J Y_f}
            \arrow["{\pi_{\tilde{\varphi}}}", from=1-1, to=1-2]
            \arrow["{\epsilon_{Y_f}}", from=1-2, to=1-3]
        \end{tikzcd}
        \qquad\text{for any $\varphi : L X \to J Y_f$}
    \]
    where $\tilde{\varphi} : X \to I R_f Y_f$ is given by the data of assumed in the hypothesis.

    We now build a natural transformation going in the opposite direction
    \[
        (\Ran_J J) \circ L
        \quad\To\quad
        \Ran_I (L \circ I)
    \]
    whose components, for any object $X$ of $\C$ are the morphisms
    \[
        \lim_{L X \to J Y_f} L Y_f
        \ \longto\
        \lim_{X \to I X_f} L I X_f
    \]
    defined by the universal property of the limit as coming from the family
    \[
        \begin{tikzcd}[ampersand replacement=\&]
            {\displaystyle\lim_{L X \to J Y_f} L Y_f} \&\& {L I X_f}
            \arrow["{\pi_{L \psi}}", from=1-1, to=1-3]
        \end{tikzcd}
        \qquad\text{for any $\psi : X \to I X_f$}
    \]
    given that $L \psi : L X \to L I X_f = J L_f X_f$.
    We then compute that these two natural transformations are inverse one of the other, hence providing the natural isomorphism.
\end{proof}

Given that $\Act$ sends finite monoids to "locally finite clones",
and $\Cst$ sends finite sets to "locally finite clones",
we can apply \Cref{thm:abstrac-criterion} to get the two following theorems.

\begin{theorem}
    \label{thm:act-profinite-iso}
    For any monoid~$M$, we have an isomorphism of clones
    \[
        \ProClone{\Act(M)}
        \quad\cong\quad
        \Act\Big(\ProMon{M}\Big)
    \]
\end{theorem}

\begin{theorem}
    \label{thm:cst-profinite-iso}
    For any set~$X$, we have an isomorphism of clones
    \[
        \ProClone{\Cst(X)}
        \quad\cong\quad
        \Cst\Big(\ProSet X\Big)
    \]
\end{theorem}

\section{Profinite trees and parametricity}
\label{sec:profinite-trees}
In the case of monoids,
a profinite word over the alphabet~$\Sigma$
is an element of the monoid~$\ProMon{\Sigma^*}$,
computed as the profinite completion
of the monoid of finite words~$\Sigma^*$.
In this section,
we extend this idea to the setting of trees through the following definition.

\begin{definition}
    \AP\label{def:profinite-tree}
    For any "ranked alphabet"~$\Sigma$ and $n \in \N$,
    a ""profinite tree"" with $n$ variables is an element of $\ProClone{\Tree(\Sigma)}_n$.
\end{definition}

We wish to give another description of "profinite trees".
For that matter,
we introduce the following notion of "bidefinability",
which amounts to a strong form of parametricity à la Reynolds~\cite{Reynolds83}.

\begin{definition}
    \AP\label{def:bidefinability}
    Let $C$ and $D$ be "clones", $n \in \N$ and $f$ a function
    \[
        f
        \quad:\quad
        \Clone(C, D)
        \ \longto\
        D_n
        \ .
    \]
    For every $t \in C_n$, we say that $f$ is ""defined"" by $t$, or that $t$ \reintro{defines} $f$, if $f$ is the function $u \mapsto u(t)$ which we write $(-) \circ t$ given the bijection $\Clone(\Tree(\{b : n\}), D) \cong D_n$ detailed in \Cref{rmk:endo-automata}.

    For any class of "clones"
    \footnote{The two classes that we will encounter later on are the following:
        \begin{itemize}
            \item the class of "locally finite clones",
            \item the class of "clones" of the form $\Endo(Q)$ where $Q$ is a finite set.
        \end{itemize}
        We emphasize the fact that these classes are essentially small:
        for each class, there is a set containing some of its elements such that any element of the class is isomorphic to one of the set.}
    and any family $u$ of functions
    \[
        u_D
        \quad:\quad
        \Clone(C, D)
        \ \longto\
        D_n
        \qquad\text{for $D$ in that class}
    \]
    we say that $u$ is ""bidefinable"" if for every two "clones" $D$ and $D'$ in the class, there exists a $t \in C_n$ which "defines" both $u_D$ and $u_{D'}$, i.e.
    \begin{align*}
        u_D
        \  & =\
        (-) \circ t
        \quad:\quad
        \Clone(C, D)
        \ \longto\
        D_n
        \\
        u_{D'}
        \  & =\
        (-) \circ t
        \quad:\quad
        \Clone(C, D')
        \ \longto\
        D'_n
    \end{align*}
\end{definition}

\begin{proposition}
    \ZAP\label{prop:bidefinable-implies-natural}
    Let $C$ be a "clone", $n \in \N$ and a fixed class of "clones". If $u$ is a family of functions
    \[
        u_D
        \quad:\quad
        \Clone(C, D)
        \ \longto\
        D_n
        \qquad\text{for all $D$ in that class}
    \]
    which is "bidefinable", then it is natural.
\end{proposition}

\begin{proof}
    Let $D, D'$ be two "clones" of the fixed class and $\varphi : D \to D'$ be a "clone morphism". By "bidefinability" of $u$, there exists $t \in C_n$ which "defines" $u_D$ and $u_{D'}$. Then, the commutativity of the square
    \[
        \begin{tikzcd}[ampersand replacement=\&]
            {\Clone(C, D)} \&\& {\Clone(C, D')} \\
            {D_n} \&\& {D'_{n}}
            \arrow["{\varphi \circ (-)}", from=1-1, to=1-3]
            \arrow["{(-) \circ t}"', from=1-1, to=2-1]
            \arrow["{(-) \circ t}", from=1-3, to=2-3]
            \arrow["{\varphi \circ (-)}"', from=2-1, to=2-3]
        \end{tikzcd}
    \]
    comes from associativity of composition of functions, so $u$ is natural.
\end{proof}

The converse of \Cref{prop:bidefinable-implies-natural} does not hold in general. However, it does in the case of "profinite trees", i.e. elements of the profinite completion of a free "clone".

\begin{lemma}
    \ZAP\label{lem:local-definability}
    Let $C$ be a "clone" and $n \in \N$.

    For any "locally finite clone" $D$, morphism $p \in \Clone(C, D)$
    and element $u \in \Pro{C}_n$,
    there exists a $t \in C_n$ such that
    \[
        u_D(p)
        \ =\
        p \circ t
        \ .
    \]
\end{lemma}

\begin{proof}
    Let $C$, $n$, $u$, $D$ and $p$ as in the statement. We note $\Image(p)$ for the image of $p$, i.e. the defined as
    \[
        \Image(p)_m
        \quad:=\quad
        \{p_m(t) : t \in C_n\}
        \qquad\text{for all $m \in \N$}
    \]
    which permits factorizing $p : C \to D$ as
    \[
        \begin{tikzcd}[ampersand replacement=\&]
            C \& {\Image(p)} \& D
            \arrow["{p_{\Image}}", from=1-1, to=1-2]
            \arrow["\iota", from=1-2, to=1-3]
        \end{tikzcd}
    \]
    By naturality of $u$, we thus get that
    \[
        u_D(p)
        \ =\
        \iota \circ (u_{\Image(p)}(p_{\Image}))
    \]
    and hence there exists $t \in C_n$ such that $u_D(p) = p_n(t)$.
\end{proof}

\begin{proposition}
    \AP\label{prop:free-clones-bidefinable}
    Let $\Sigma$ be any "ranked alphabet".
    "Profinite trees" on $\Sigma$ are "bidefinable".
\end{proposition}

\begin{proof}
    Let us write $C$ for the "clone"~$\Tree(\Sigma)$.
    Let $u$ be a "profinite tree" on $\Sigma$ with $n$ variables, i.e. a natural family of functions
    \[
        u_D
        \quad:\quad
        \Clone(\Tree(\Sigma), D)
        \ \longto\
        \Clone(\Tree(\{b : n\}), D)
    \]
    where $D$ ranges over all "locally finite clones". We pick two "locally finite clones" $D$ and $D'$. Using the fact that $\FinClone$ is a sub-cartesian category of $\Clone$, we consider the locally finite clone $D^\times$ defined as
    \[
        D^\times
        \quad:=\quad
        D^{\Clone(C, D)} \times D'^{\Clone(C, D')}
    \]
    which hence induces a bijection between $\Clone(C, D^\times)$ and
    \[
        {\Clone(C, D)}^{\Clone(C, D)} \times {\Clone(C, D')}^{\Clone(C, D')}
    \]
    and we note $\pgen \in \Clone(C, D^\times)$ for the element corresponding to the pair $(\Id, \Id)$ through this bijection.
    By \Cref{lem:local-definability} applied to $\pgen$, we get the existence of $t \in C_n$ such that $u_{D^\times}(\pgen) = \pgen \circ t$. For any $p \in \Clone(C, D)$ and $p' \in \Clone(C, D')$, the naturality of $u$ gives that
    \begin{align*}
        u_{D}(p)
        \  & =\
        u_{D}(\pi_p \circ \pi_1 \circ \pgen)
        \\&=\
        (\pi_p \circ \pi_1)_n(u_{D^\times}(\pgen))
        \\&=\
        (\pi_p \circ \pi_1)_n(\pgen \circ t)
        \\&=\
        (\pi_p \circ \pi_1)_n(\pgen) \circ t
        \\&= \
        p \circ t
    \end{align*}
    which proves that $t$ "defines" $u_D$. In the same way, $t$ "defines" $u_{D'}$, which proves that $u$ is "bidefinable".
\end{proof}

The proof of \Cref{prop:free-clones-bidefinable} can be used to show that
"profinite trees" are actually multidefinable,
for this meaning the same as "bidefinability" from \Cref{def:bidefinability} but taking a finite number of components of the family.
For that, it suffices in the proof to consider the clone
\[
    D^\times
    \quad:=\quad
    \prod_{1 \le i \le n}
    \Clone(C, D_i)^{\Clone(C, D_i)}
\]
which is \kl[locally finite clone]{locally finite} as a finite product of "locally finite clones".

\begin{remark}
    \label{rmk:orthorel}
    We consider the binary relation $\orthorel$ on "clones" such that
    \[
        C \orthorel D
        \qquad\text{if and only if}\qquad
        \text{the set $\Clone(C, D)$ is finite}
    \]
    Modulo set-theoretic matters,
    this relation~$\orthorel$ induces two assignments $\orthoright{(-)}$ and $\ortholeft{(-)}$ on classes of "clones", defined as
    \begin{align*}
        \orthoright{\mathcal{F}}
        \quad & :=\quad
        \{\text{"clone"~$D$} \mid \text{for all } C \in \mathcal{F}, \text{ we have } C \orthorel D\}
        \\
        \ortholeft{\mathcal{G}}
        \quad & :=\quad
        \{\text{"clone"~$C$} \mid \text{for all } D \in \mathcal{G}, \text{ we have } C \orthorel D\}
    \end{align*}
    which assemble into a Galois connection on classes of "clones".
    The class~$\orthoright{\mathcal{F}}$ is always stable by subobjects and finite limits,
    and the class $\ortholeft{\mathcal{G}}$ is always stable by quotients and finite colimits.
    Given that free "clones" of the form $\Tree(\Sigma)$ correspond to representable "models"~$\TClone(\Sigma, -)$,
    by the Yoneda lemma we have that
    \[
        \orthoright{\{\text{free "clones"}\}}
        \quad=\quad
        \{\text{"locally finite clones"}\}
    \]
    The proof of \Cref{prop:free-clones-bidefinable} applies verbatim
    to the case where the the free "clone"~$\Tree(\Sigma)$ is replaced
    by any "clone"~$C$ in the class
    \[
        \ortholeft{\{\text{"locally finite clones"}\}}
    \]
    Many of them are not built out of free ones by finite colimits and quotients.
    For example, the monoid $\Q$ of rational numbers equipped with their addition
    induces the "clone"~$\Act(\Q)$ which belongs to that class,
    given that for any "locally finite clone"~$D$,
    we have
    \[
        \Clone(\Act(\Q), D)
        \quad\cong\quad
        \Mon(\Q, D_1)
        \quad=\quad
        \{x \mapsto v_1\}
    \]
\end{remark}

\section{Profinite trees and the profinite \pdflambda-calculus}
\label{sec:profinite-lambda-terms}

\AP
In this section,
we recall the notion of "profinite $\lambda$-term" introduced in~\cite{entics:12280}
and we establish that the "profinite trees" on a given "ranked alphabet"~$\Sigma$,
introduced in \Cref{sec:profinite-trees},
are exactly the "profinite $\lambda$-terms" of type~$\Church{\Sigma}$, i.e. of the type
\[
    \intro*\Church{\Sigma}
    \quad:=\quad
    (\tyo^{n_1} \To \tyo) \To \dots \To (\tyo^{n_l} \To \tyo) \To \tyo
\]
for any "ranked alphabet"~$\Sigma := \{a_1 : n_1, \dots, a_l : n_l\}$.
This is moreover a homeomorphism of Stone spaces which preserves the clone structure.

We start by recalling the notion of "profinite $\lambda$-term" of type~$\Church{\Sigma}$, taking advantage of their reformulation in terms of parametricity developed in~\cite[Chapter~8]{moreau:tel-05428993}.
For this, we use the notion of "bidefinability" introduced in \Cref{def:bidefinability}.

\begin{definition}
    \AP\label{def:profinite-lambda-term}
    For any "ranked alphabet"~$\Sigma$ and natural number~$n \in \N$,
    a ""profinite $\lambda$-term""~$\theta$ of type $\tyo^n \To \Church{\Sigma}$ is a "bidefinable" family of set-theoretic functions
    \[
        \theta_Q
        \quad:\quad
        \Clone(\Tree(\Sigma), \Endo(Q))
        \ \longto\
        \Endo(Q)_n
    \]
    where $Q$ ranges over all finite sets.
    We write $\intro*\ProTm(\tyo^n \To \Church{\Sigma})$ for the set of "profinite $\lambda$-terms" of that type.
\end{definition}

\begin{remark}
    \label{rmk:church-tyo}
    For any "ranked alphabet"~$\Sigma$, the type
    \[
        \tyo^n \To \Church{\Sigma}
        \qquad\text{is itself of the form}\qquad
        \Church{\Gamma}
    \]
    where $\Gamma$ is the extended "ranked alphabet" defined as $\Sigma \sqcup \{v_1 : 0, \dots, v_n : 0\}$.
    The notion of "profinite $\lambda$-term" given in \Cref{def:profinite-lambda-term},
    parametrized by a "ranked alphabet",
    yields the same objects with $\Sigma$ and $n$, and with $\Gamma$ and $0$.
\end{remark}

We have proven in \Cref{prop:free-clones-bidefinable} that "profinite trees"
are the "bidefinable families" over the class of all "locally finite clones".
"Profinite $\lambda$-terms" as in \Cref{def:profinite-lambda-term}
are "bidefinable families" over the class of all "clones"~$\Endo(Q)$ where $Q$ ranges over finite sets.
As a consequence,
it is straightforward to associate
to any "profinite tree"~$u$
a "profinite $\lambda$-term"~$\restrict(u)$.

\begin{definition}
    \ZAP\label{def:trees-to-terms}
    Let $\Sigma$ be a "ranked alphabet" and $n \in \N$.
    For any "profinite tree"~$u$ over $\Sigma$ with $n$ variables,
    we define a "profinite $\lambda$-term"
    \[
        \theta
        \quad:=\quad
        \restrict(u)
    \]
    of type $\tyo^n \To \Church{\Sigma}$, whose components are the functions
    \[
        \theta_Q
        \quad:=\quad
        u_{\Endo(Q)}
        \qquad\text{where $Q$ ranges over all finite sets.}
    \]
    This defines a family of set-theoretic functions
    \[
        \intro*\restrict_n
        \quad:\quad
        \Pro{\Tree(\Sigma)}_n
        \ \longto\
        \ProTm(\tyo^n \To \Church{\Sigma})
        \qquad\text{for every $n \in \N$}
    \]
    which defines a "clone morphism" from "profinite trees" to "profinite $\lambda$-terms".
\end{definition}

We now go the other way around
i.e. from "profinite $\lambda$-terms" to "profinite trees".
to achieve this,
we consider a Cayley morphism for "clones".
Through the correspondence between clones and finitary monads,
this morphism to the fact that
each set $C_n$ of a clone $C$ carries a structure of algebra,
as it indeed is the free algebra on $n$ generators.

\begin{definition}
    \ZAP\label{def:cayley-clone}
    Let $D$ be a "clone" and $n \in \N$.
    We write $\cay^n$ for the clone morphism
    \[
        \intro*\cay^n
        \quad:\quad
        D
        \ \longto\
        \Endo(D_n)
    \]
    whose components are the functions
    \[
        \cay^n_m
        \quad:\quad
        \begin{matrix}
            D_m
             & \longto     &
            \Endo(D_n)_m
            \\
            x
             & \longmapsto &
            \begin{pmatrix}
                (D_n)^m
                 & \to     &
                D_n
                \\
                (y_1, \dots, y_m)
                 & \mapsto &
                x[y_1, \dots y_m]
            \end{pmatrix}
        \end{matrix}
    \]
    obtained by currying the "substitution"~$(-_1)[-_2]$ of the "clone"~$D$.
\end{definition}

To obtain a "profinite $\lambda$-term" from a "profinite tree",
we have seen in \Cref{def:trees-to-terms} that we only need to restrict a "profinite tree"
from all "locally finite clones" to those of the form $\Endo(Q)$ for a finite set~$Q$.
To go the other way around, we define the following function,
dual to the morphism~$\cay^n$.

\begin{definition}
    \AP\label{def:appvar}
    For any "clone"~$D$ and natural number~$n \in \N$,
    we consider the following set-theoretic function defined by application of "variables":
    \[
        \intro*\appvar
        \quad:\quad
        \begin{matrix}
            \Endo(D_n)_n
             & \longto     &
            D_n
            \\
            f
             & \longmapsto &
            f(v_1, \dots, v_n)
        \end{matrix}
    \]
    Using the right unitality axiom described in \Cref{def:clone},
    the function $\appvar$ appears as a retraction of the component function
    \[
        \cay^n_n
        \quad:\quad
        \begin{matrix}
            D_n
             & \longto     &
            \Endo(D_n)_n
            \\
            x
             & \longmapsto &
            \Big((y_1, \dots, y_m) \mapsto x[y_1, \dots y_m]\Big)
        \end{matrix}
    \]
\end{definition}

\begin{proposition}
    \ZAP\label{prop:restrict-injective}
    For any "ranked alphabet"~$\Sigma$ and natural number~$n \in \N$,
    the function
    \[
        \restrict_n
        \quad:\quad
        \ProClone{\Tree(\Sigma)}_n
        \ \longto\ \ProTm(\tyo^n \To \Church{\Sigma})
    \]
    is injective.
\end{proposition}

\begin{proof}
    Let $u$ and a "profinite tree".
    By naturality, for any "locally finite clone"~$D$ the following diagram commutes:
    \[
        \begin{tikzcd}[ampersand replacement=\&]
            {\Clone(\Tree(\Sigma), D)} \&\& {\Clone(\Tree(\Sigma), \Endo(D_n))} \\
            {D_n} \&\& {\Endo(D_n)_n}
            \arrow["{\cay^n \circ (-)}", from=1-1, to=1-3]
            \arrow["{u_D}"', from=1-1, to=2-1]
            \arrow["{u_{\Endo(D_n)}}", from=1-3, to=2-3]
            \arrow["{\cay^n_n}"', from=2-1, to=2-3]
        \end{tikzcd}
    \]
    Composing by the retraction~$\appvar$ of $\cay^n_n$,
    we obtain the following commutative diagram
    \[
        \begin{tikzcd}[ampersand replacement=\&]
            {\Clone(\Tree(\Sigma), D)} \&\& {\Clone(\Tree(\Sigma), \Endo(D_n))} \\
            {D_n} \&\& {\Endo(D_n)_n}
            \arrow["{\cay^n \circ (-)}", from=1-1, to=1-3]
            \arrow["{u_D}"', from=1-1, to=2-1]
            \arrow["{u_{\Endo(D_n)}}", from=1-3, to=2-3]
            \arrow["\appvar", from=2-3, to=2-1]
        \end{tikzcd}
    \]
    which shows that $u$ is determined by the subfamily of its components~$u_{\Endo(Q)}$ for every finite set~$Q$.
    As a consequence, the function $\restrict_n$ is injective.
\end{proof}

\begin{proposition}
    \ZAP\label{prop:restrict-surjective}
    For any "ranked alphabet"~$\Sigma$, natural number~$n \in \N$
    and "profinite $\lambda$-term"~$\theta$ of type $\tyo^n \To \Church{\Sigma}$,
    the family of functions $u_D$, for a "locally finite clone"~$D$, defined as the composition
    \[
        \begin{tikzcd}[ampersand replacement=\&]
            {\Clone(\Tree(\Sigma), D)} \&\& {\Clone(\Tree(\Sigma), \Endo(D_n))} \\
            {D_n} \&\& {\Endo(D_n)_n}
            \arrow["{\cay^n \circ (-)}", from=1-1, to=1-3]
            \arrow["{u_D}"', dashed, from=1-1, to=2-1]
            \arrow["{\theta_{D_n}}", from=1-3, to=2-3]
            \arrow["\appvar", from=2-3, to=2-1]
        \end{tikzcd}
    \]
    assembles into a "profinite tree"~$u$ over $\Sigma$ with $n$ variables.
    In other words, the function
    \[
        \restrict_n
        \quad:\quad
        \ProClone{\Tree(\Sigma)}_n
        \ \longto\ \ProTm(\tyo^n \To \Church{\Sigma})
    \]
    is surjective.
\end{proposition}

\begin{proof}
    Let $\theta$ be a "profinite $\lambda$-term" of type~$\tyo^n \To \Church{\Sigma}$.
    We need to show that the families of $u_D$,
    where $D$ ranges over all "locally finite clones", is "bidefinable".
    We remark that for any "locally finite clone"~$D$,
    if $\theta_{D_n}$ is "defined" by $t$, then so is $u_D$.
    Therefore, as for any "locally finite clones"~$D$ and~$D'$
    there exists $t$ that "defines" both $\theta_{D_n}$ and $\theta_{D'_n}$,
    then the same $t$ "defines" both $u_D$ and $u_{D'}$.
    This shows that the family~$u$ of all the $u_D$ is "bidefinable".
\end{proof}

As a "profinite completion", the "clone"~$\ProClone{\Tree(\Sigma)}$ naturally carries topologies of Stone spaces on each of the set $\ProClone{\Tree(\Sigma)}_n$ for $n \in \N$.
Moreover, the sets $\ProTm(\tyo^n \To \Church{\Sigma})$ also carry a topology making them Stone spaces,
see~\cite[\S~3]{entics:12280}.
Taking the two topologies into account, we then obtain the following theorem:

\begin{theorem}
    \ZAP\label{thm:isomorphism-theorem}
    For any "ranked alphabet"~$\Sigma$, the "clone morphism"
    \[
        \restrict
        \quad:\quad
        \ProClone{\Tree(\Sigma)}
        \ \longto\
        \ProTm\Big(\tyo^{(-)} \To \Church{\Sigma}\Big)
    \]
    is an isomorphism of "clones" "enriched" in Stone spaces.
\end{theorem}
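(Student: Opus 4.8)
The isomorphism of \emph{ordinary} clones is already in hand: by \Cref{prop:terms-to-trees} the clone morphism $\restrict$ is a levelwise bijection compatible with the clone structure, its inverse being the map built from the Cayley morphism $\cay^n$ and $\appvar$. What remains for the theorem is purely topological: I would verify that both sides are genuine $\Stone$-enriched clones and that each component $\restrict_n$ is a homeomorphism. The key simplification is that I will only ever have to check continuity of $\restrict_n$ \emph{itself}, never of its inverse, because a continuous bijection between compact Hausdorff spaces is automatically a homeomorphism, and Stone spaces are compact and Hausdorff.

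First I would pin down the two enrichments. Each level $\ProCompletion{\Free \Sigma}_n = \lim_{p : \Free \Sigma \to C} C_n$ is a small limit of finite discrete sets (recall $\FinClone$ is essentially small), hence a closed subspace of a product of finite discrete spaces, i.e. a Stone space. Since every operation of a finite clone is trivially continuous, the operations $v_n$ and $s_{m,n}$ on $\ProCompletion{\Free \Sigma}$, being induced levelwise from those of the $C$ by the universal property of the limit, are continuous; this exhibits $\ProCompletion{\Free \Sigma}$ as a $\Stone$-enriched clone. Symmetrically, $\ProChurch{\Sigma}_n \cong \ProTm{\Sigma \tto \tyo^n \tto \tyo}$ is a limit of the finite discrete sets $\sem{\Sigma \tto \tyo^n \tto \tyo}{Q}$, hence a Stone space, and its operations are continuous; the Stone topology together with the continuity of the Kleisli and cartesian closed structure is recorded in \cite[\S 3]{entics:12280}.

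Next I would show each $\restrict_n$ is continuous. The topology on $\ProTm{\Sigma \tto \tyo^n \tto \tyo}$ is the initial topology for the coordinate projections $\pi_Q$ to the finite discrete spaces $\sem{\Sigma \tto \tyo^n \tto \tyo}{Q}$, so it suffices to check continuity of each $\pi_Q \circ \restrict_n$. By the definition of $\restrict$ in \Cref{prop:trees-to-terms} we have $\pi_Q(\restrict_n(u)) = \theta_Q = u_{\Endo{Q}}$, the $\Endo{Q}$-indexed component of $u$. Because $\Sigma$ is a finite "ranked alphabet" and $\Endo{Q}$ is "locally finite", the set $\Clone(\Free \Sigma, \Endo{Q})$ is finite, so $u \mapsto u_{\Endo{Q}}$ is a finite tuple of limit projections $u \mapsto u_{\Endo{Q}}(p)$ into finite discrete spaces, and hence continuous into $\sem{\Sigma \tto \tyo^n \tto \tyo}{Q}$ (identified with such functions via \Cref{lem:substitution-lemma}). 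Thus $\restrict_n$ is continuous; equivalently, $\restrict_n$ is the map of limits induced by restricting the indexing diagram from all "locally finite clones" to the subdiagram of "endomorphism clones" $\Endo{Q}$, and maps induced between limits by a morphism of diagrams are continuous.

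Finally, each $\restrict_n$ is a continuous bijection between Stone spaces, hence a homeomorphism, and it is compatible with the continuous clone operations established above; therefore $\restrict$ is an isomorphism of $\Stone$-enriched clones. The one place demanding care is conceptual rather than computational: continuity of the nontrivial inverse from \Cref{prop:terms-to-trees}, assembled from $\cay^n$ and $\appvar$, is never checked directly but is supplied for free by compactness, which is exactly why organizing the argument around the \emph{forward} restriction map is the right move.
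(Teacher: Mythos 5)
Your proposal is correct and takes essentially the same route as the paper: \Cref{prop:terms-to-trees} supplies the clone isomorphism, compactness and Hausdorffness of Stone spaces reduce the problem to continuity of the forward map $\restrict_n$ alone, and that continuity is checked against the product topologies using the finiteness of $\sem{\Sigma}{Q} \cong \Clone(\Free \Sigma, \Endo{Q})$. The paper merely unfolds your initial-topology argument into an explicit computation, writing $\restrict_n^{-1}(V_T)$ as a finite union of finite intersections of subbasic open sets $U_{p, \{f(p)\}}$ of $\ProCompletion{\Free \Sigma}_n$.
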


\begin{proof}
    Taken together, \Cref{prop:restrict-injective} and \Cref{prop:restrict-surjective} imply that all the components of the clone morphism~$\restrict$ are bijections.
    As they also preserve "variables" and "substitutions",
    we obtain that the whole morphism is an isomorphism of "clones".
    As Stone spaces are compact Hausdorff, to show that each component is a homeomorphism, it suffices to show that it is continuous.

    The topology on $\ProClone{\Tree(\Sigma)}_n$ is the topology induced by the inclusion in the product
    \[
        \ProClone{\Tree(\Sigma)}_n
        \quad\subseteq\quad
        \prod_{p : \Tree(\Sigma) \to D} D_n
    \]
    More concretely, it has a sub-base whose open sets are of the following form, for $D$ is a "locally finite clone", $p \in \Clone(\Tree(\Sigma), D)$ and $S$ a subset of $D_n$,
    \[
        U_{p, S}
        \quad:=\quad
        \left\{
        u \in \ProClone{\Tree(\Sigma)}_n
        \mid
        u_C(p) \in S
        \right\}
    \]
    The topology on $\ProTm(\tyo^n \To \Church{\Sigma})$ is given by the inclusion in the product space
    \[
        \ProTm(\tyo^n \To \Church{\Sigma})
        \quad\subseteq\quad
        \prod_{Q} \Big(\Endo(Q)_n\Big)^{\Clone(\Tree(\Sigma), \Endo(Q))}
    \]
    More concretely, it has a sub-base whose open sets are of the form, for $Q$ a finite set and $T$ a subset of $\left(\Endo(Q)_n\right)^{\Clone(\Tree(\Sigma), \Endo(Q))}$,
    \[
        V_{Q, T}
        \quad:=\quad
        \left\{
        \theta \in \ProTm(\tyo^n \To \Church{\Sigma})
        \mid
        \theta_Q \in T
        \right\}
        \ .
    \]
    We now verify that, for $u \in \ProClone{\Tree(\Sigma)}_n$, for $Q$ any finite set and for $T \subseteq \left(\Endo(Q)_n\right)^{\Clone(\Tree(\Sigma), \Endo(Q))}$,
    we have that
    \begin{align*}
        \restrict_n(u) \in V_{Q, T}
        \quad\iff & \quad
        u_{\Endo(Q)} \in V_{Q, T}
        \\\iff&\quad
        \exists f \in T, u_{\Endo(Q)} = f
        \\\iff&\quad
        \exists f \in T, \forall p \in \Clone(\Tree(\Sigma), \Endo(Q)), u_{\Endo(Q)}(p) = f(p)
        \\\iff&\quad
        \exists f \in T, \forall p \in \Clone(\Tree(\Sigma), \Endo(Q)), u \in U_{p, \{f(p)\}}
        \\\iff&\quad
        u \in \bigcup_{f \in T} \bigcap_{p \in \Clone(\Tree(\Sigma), \Endo(Q))} U_{p, \{f(p)\}}
        \ .
    \end{align*}
    Therefore, $\restrict_n^{-1}(V_{Q, T})$ is a finite union of finite intersections of $U_{p, S}$, so it is an open set of $\ProClone{\Tree(\Sigma)}_n$. This proves that
    \[
        \restrict_n
        \quad:\quad
        \ProClone{\Tree(\Sigma)}_n
        \ \longto\
        \ProTm(\tyo^n \To \Church{\Sigma})
    \]
    is continuous, so it is a homeomorphism.
    Therefore, $\restrict$ is an isomorphism of clones enriched in Stone spaces.
\end{proof}

\section{Conclusion}

In this paper, we consider "clones" as an appropriate notion of algebra to treat finite ranked trees with sharing and their run inside of deterministic, bottom-up automata.
We define the "profinite completion" of "clones",
and we develop an abstract criterion based on parametric right adjoints to show that it generalizes the profinite completion of monoids and the ultrafilter monad.
When applied to free clones, this profinite completion yields a notion of profinite tree,
which we show to coincide with a fragment of the recently introduced profinite $\lambda$-calculus.

As future work, we want to develop the theory of pseudovarieties of locally finite clones and establish an analog Reiterman's theorem~\cite{Reiterman1982}
using our newly introduced notion of profinite tree.
We are also interested in other applications of the techniques based on parametric right adjoints,
in the algebraic setting of the "theories" considered here,
to codensity monads.

\bibliography{biblio}

\appendix

\end{document}